\newtheorem{pavikc}{\textbf{Corollary}}
\newtheorem{pavikl}{\textbf{Lemma}}
\newtheorem{pavikt}{\textbf{Theorem}}
\newcommand{\argmin}{\operatornamewithlimits{argmin}}
\newcommand{\argmax}{\operatornamewithlimits{argmax}}
\begin{document}

\title{Distributed Function Computation in Asymmetric Communication Scenarios}%
\author{\IEEEauthorblockN{Samar Agnihotri\IEEEauthorrefmark{2} and Rajesh Venkatachalapathy\IEEEauthorrefmark{4}}%
\IEEEauthorblockA{\IEEEauthorrefmark{2}CEDT, Indian Institute of Science, Bangalore - 560012, India\\}
\IEEEauthorblockA{\IEEEauthorrefmark{4}Systems Science Graduate Program, Portland State University, Portland, OR 97207\\}
Email: samar@cedt.iisc.ernet.in, venkatr@pdx.edu%
}

\IEEEspecialpapernotice{(Extended Abstract)}

\maketitle

\begin{abstract}
We consider the distributed function computation problem in asymmetric communication scenarios, where the sink computes some deterministic function of the data split among $N$ correlated informants. The distributed function computation problem is addressed as a generalization of distributed source coding (DSC) problem. We are mainly interested in minimizing the number of informant bits required, \textit{in the worst-case}, to allow the sink to exactly compute the function. We provide a constructive solution for this in terms of an interactive communication protocol and prove its optimality. The proposed protocol also allows us to compute the worst-case achievable rate-region for the computation of any function. We define two classes of functions: \textit{lossy} and \textit{lossless}. We show that, in general, the \textit{lossy} functions can be computed at the sink with fewer number of informant bits than the DSC problem, while computation of the \textit{lossless} functions requires as many informant bits as the DSC problem.
\end{abstract}

\section{Introduction}
\label{sec:Intro}
Let us consider a distributed function computation scenario, where a sink node is interested in \textit{exactly} computing some deterministic function $f = f(\overline{X})$ of data-vector $\overline{X}$ that is split among $N$ correlated informants. The correlation in informants' data is modeled by discrete and finite distribution $\cal P$, known only to the sink (asymmetric communication, \cite{097kushilevitzNisan}). The sink and informants interactively communicate with each other, with communication proceeding in rounds, as in \cite{079yao}. We are concerned with minimizing the number of bits that the informants send, \textit{in the worst-case}, to allow the sink to compute the function.

We consider the distributed function computation problem as a generalization of \textit{distributed source coding} (DSC) problem\footnote{DSC problem is a special case of distributed function computation problem where the function to be computed is identity map, ${id}_{\overline{X}}$.}. The particular distributed function computation problem we consider is a generalization of DSC problem in asymmetric communication scenarios, we addressed in \cite{allerton08}. As for that work, the motivation for this work too comes from sensor networks, particularly from our efforts to address the distributed function computation problem in single-hop data-gathering wireless sensor networks, while maximizing the worst-case operational lifetime of the network. In a typical data-gathering sensor network, it is reasonable to assume that the base-station has large resources of energy, computation, and communication as well as the knowledge of correlations in sensor data, whereas a sensor node is resource limited and only knows its sampled data-values. Therefore, we argue that in such communication scenarios, the onus should be on the base-station to bear most of the burden of computation and communication associated with function computation. Allowing interactive communication between the base-station and sensor nodes lets us precisely do this: base-station forms and communicates \textit{efficient} queries to sensor nodes, which they respond to with short and easily computable messages. This reduces the communication and computation effort at sensor nodes, hence enhancing their lifetime, which in turn leads to increased network lifetime.

The distributed function computation problem was first addressed by Yao in \cite{079yao} and later by other researchers in different setups, as we discuss in Section~\ref{sec:relatedWork}. However, our work mainly differs from the extant work in one or more aspects as follows. First, we approach the distributed function computation problem as a generalization of DSC problem. This allows us to exploit the correlation in informants' data to solve the function computation problem at the sink with fewer informant bits. Second, we are concerned with asymmetric communication (only sink knows the joint distribution of informants' data) and asymmetric computation (only sink computes the function). Third, we are concerned with the worst-case analysis. Fourth, we are interested in distributed function computation with a single instance of data at informants (\textit{one-shot} computation problem). Finally, we consider a more powerful model of communication where the sink and informants interactively communicate with each other. Our work allows us to clearly delineate the roles played in optimally solving distributed function computation problem in arbitrary networks by correlation in informants' data, the properties of the function to be computed, communication model, network connectivity graph, and routing strategies. In this sense, our work acts as a fundamental building block to a general theory of distributed function computation over arbitrary networks, which we expect to eventually develop.

In Section~\ref{sec:ambiguity}, we revisit the notion of \textit{information ambiguity}, an information measure we proposed in \cite{isita08} for the worst-case information-theoretic analyses, and extend it to a form useful in the present context. In Section~\ref{sec:probSetting}, we provide the details of the communication model we assume and formally introduce the variant of distributed function computation problem we address in this paper. In the next section, we give a communication protocol to compute any given function at the sink, prove its optimality with respect to minimizing the number of informant bits, and provide the bounds on its performance. Finally in Section~\ref{sec:functionProperties}, we discuss some properties of distributed function computation problem and propose a classification scheme for functions, based on the number of informant bits required, in general, to compute those at the sink.

\section{Related Work}
\label{sec:relatedWork}
There are three major existing approaches to address the distributed function computation problem, as follows:

\textit{Communication complexity:} The seminal paper by Yao \cite{079yao} introduced the problem of computing the minimum number of bits exchanged between two processors when both the processors compute a function of the input that is split between processors. Variants of this problem and numerous solution approaches have been explored in the field of communication complexity, \cite{097kushilevitzNisan}. This work provides insights into developing efficient communication protocols for function computation. However, it is mainly interested in estimating the order-of-magnitude of the bounds on communication and computation costs. Also, it is not obvious how to extend this work, when for example, one or more nodes in the network are interested in computing some function of source nodes' data or the source data is split among more than two nodes and is possibly correlated.

\textit{Scaling laws:} Recently in \cite{105giridharKumar, 108khudeKarnikKumar, 108kamathManjunath}, the distributed function computation problem has been addressed to find how the rate of function computation scales with network size. This approach however does not provide a simple framework to exploit the correlation in source data and to incorporate stronger models of computation and communication, such as interactive communication, data-buffers, cooperating sources.

\textit{Information theory:} Much before Yao introduced his formulation of distributed function computation problem, Slepian and Wolf in \cite{073slepianWolf} introduced the DSC problem. It was many years before distributed function computation problem was seriously addressed in information-theoretic setup, \cite{087hanKobayashi, 088gallager, 096schulman, 101orlitskyRoche}. Still, there is very little such work that comprehensively addresses the distributed function computation problem over any given network, function, and model of communication and computation.

\section{Information Ambiguity for Distributed Function Computation}
\label{sec:ambiguity}
We revise and generalize some relevant definitions and properties of \textit{information ambiguity}, an information measure we introduced in \cite{isita08} for performing the worst-case information-theoretic analysis in certain communication scenarios. We then extend the notion of information ambiguity to a form useful for distributed function computation in this paper.

\textit{Note:} All the logarithms used in this paper are to the base 2, unless explicitly mentioned otherwise.

Let us consider a $N$-tuple of random variables $(X_1, \ldots, X_N) \sim {\cal P} = p(x_1, \ldots, x_N), X_i \in {\cal X}, i \in \{1, \ldots, N\}$, where ${\cal X}$ is discrete and finite alphabet of size $|\cal X|$. The \textit{support set} of $(X_1, \ldots, X_N)$ is defined as:
\begin{equation}
\label{eqn:ambiguity_set_dstrbnN}
S_{X_1, \ldots, X_N} \stackrel{\textrm{def}}{=} \{(x_1, \ldots, x_N) | p(x_1, \ldots, x_N) > 0 \}
\end{equation}
We also call $S_{X_1, \ldots, X_N}$ as the \textit{ambiguity set} of $(X_1, \ldots, X_N)$. The cardinality of $S_{X_1, \ldots, X_N}$ is called \textit{ambiguity} of $(X_1, \ldots, X_N)$ and denoted as $\mu_{X_1, \ldots, X_N} = |S_{X_1, \ldots, X_N}|$. So, the minimum number of bits required to describe an element of $S_{X_1, \ldots, X_N}$, in the worst-case, is $\lceil \log \mu_{X_1, \ldots, X_N} \rceil$.

The \textit{support set} $S_{X_i}$ of $X_i, i \in \{1, \ldots, N\}$, is the set
\begin{equation}
\label{eqn:ambiguity_set_varN}
S_{X_i} \stackrel{\textrm{def}}{=} \{x_i: \mbox{ for some } x_{-i}, (x_{-i}, x_i) \in S_{X_1, \ldots, X_N}\}, \mbox{with } x_{-i} \stackrel{\textrm{def}}{=} \{x_1, \ldots, x_N\} \setminus x_i
\end{equation}
of all possible $X_i$ values. We also call $S_{X_i}$ \textit{ambiguity set} of $X_i$. The \textit{ambiguity} of $X_i$ is defined as $\mu_{X_i} = |S_{X_i}|$. The \textit{conditional ambiguity set} of $(X_1, \ldots, X_N)$, when random variable $X_i$ takes the value $x_i, x_i \in S_{X_i}$, is
\begin{equation}
\label{eqn:cond_ambiguity_setN}
S_{X_1, \ldots, X_N|X_i}(x_i) \stackrel{\textrm{def}}{=} \{(x_1, \ldots, x_N): (x_1, \ldots, x_N) \in S_{X_1, \ldots, X_N} \mbox{ and } x_i \in S_{X_i}\},
\end{equation}
the set of possible $(X_1, \ldots, X_N)$ values when $X_i = x_i$. The \textit{conditional ambiguity} in that case is $\mu_{X_1, \ldots, X_N|X_i}(x_i) = |S_{X_1, \ldots, X_N|X_i}(x_i)|$, the number of possible values of $(X_1, \ldots, X_N)$ when $X_i =x_i$. The \textit{maximum conditional ambiguity} of $(X_1, \ldots, X_N)$ is
\begin{equation}
\label{eqn:max_cond_ambiguityN}
\widehat{\mu}_{X_1, \ldots, X_N|X_i} \stackrel{\textrm{def}}{=} \sup \{\mu_{X_1, \ldots, X_N|X_i}(x_i): x_i \in S_{X_i}\},
\end{equation}
the maximum number of $(X_1, \ldots, X_N)$ values possible with any value that $X_i$ can take.


In fact, for any two subsets $X_A$ and $X_B$ of $\{X_1, \ldots, X_N\}$, such that $X_A \cup X_B \subseteq \{X_1, \ldots, X_N\}$ and $X_A \cap X_B = \phi$, we can define for example, \textit{ambiguity set} $S_{X_A}$ of $X_A$, \textit{conditional ambiguity set} $S_{X_A|X_B}(x_B)$ of $X_A$ given the set $x_B$ of values that $X_B$ can take, and \textit{maximum conditional ambiguity set} $S_{X_A|X_B}$ of $X_A$ for any set of values that $X_B$ can take, with corresponding \textit{ambiguity}, \textit{conditional ambiguity}, and \textit{maximum conditional ambiguity} given by $\mu_{X_A}$, $\mu_{X_A|X_B}(x_B)$, and $\widehat{\mu}_{X_A|X_B}$, respectively. However, for the sake of brevity, we do not develop the precise definitions of these quantities here.

Further, let us represent each of $\mu_{X_i}$ values that random variable $X_i$ can take in $\lceil \log \mu_{X_i} \rceil$ bits as $b^i_1 \ldots b^i_{\lceil \log \mu_{X_i} \rceil}$. Let $\mbox{binary}_j(x_i)$ represent the value of $j^{\textrm{th}}, 1 \le j \le \lceil \log \mu_{X_i} \rceil$, bit-location in the bit-representation of $x_i$. Then, knowing that the value of $j^{\textrm{th}}$ bit-location is $b, b \in \{0, 1\}$, we can define the set of possible values that $X_i$ can take as
\begin{equation}
\label{eqn:cond_ambiguity4Bit}
S_{X_i|b^i_j}(b) \stackrel{\textrm{def}}{=} \{x_i: x_i \in S_{X_i} \mbox{ and } \mbox{binary}_j(x_i) = b\},
\end{equation}
with corresponding cardinality denoted as $\mu_{X_i|b^i_j}(b)$. We can similarly define $S_{X_A|b^i_j}(b)$ with $X_i \in X_A$ as
\begin{equation}
\label{eqn:cond_ambiguityOset4Bit}
S_{X_A|b^i_j}(b) \stackrel{\textrm{def}}{=} \{x_A: x_A \in S_{X_A} \mbox{ and } \mbox{binary}_j(x_i) = b\},
\end{equation}
with corresponding cardinality denoted as $\mu_{X_A|b^i_j}(b)$. The definitions of conditional ambiguity sets in \eqref{eqn:cond_ambiguity4Bit} and \eqref{eqn:cond_ambiguityOset4Bit} can be easily extended to the situations where the values of one or more bit-locations in one or more random variable's bit-representation are known, but once more for the sake of brevity, we omit the details of such extended definitions.

Next, we introduce the notion of the ambiguity set and ambiguity of the function output values. The support-set of output values of some function $f$, also called \textit{ambiguity set} of function output values of function $f$, is defined as:
\begin{equation}
\label{eqn:ambiguity_set_f}
S_f \stackrel{\textrm{def}}{=} \{f(x_1,\ldots, x_N) : \mbox{ for some } (x_1,\ldots, x_N) \in S_{X_1,\ldots, X_N} \}
\end{equation}
The cardinality of $S_f$ is called \textit{ambiguity} of output values of function $f$ and denoted as $\mu_f = |S_f|$. So, the minimum number of bits required to describe an element in $S_f$ is $\lceil \log \mu_f \rceil$. The \textit{conditional ambiguity set} of function output values when $X_i = x_i, x_i \in S_{X_i}, i \in \{1, \ldots, N\}$, is defined as
\begin{equation}
\label{eqn:cond_ambiguity_set_f}
S_{f|X_i}(x_i) \stackrel{\textrm{def}}{=} \{f(x_1,\ldots, x_N) : \mbox{ for some } (x_1,\ldots, x_N) \in S_{{X_1,\ldots, X_N}|X_i} (x_i) \}
\end{equation}
The corresponding cardinality is called \textit{conditional ambiguity} of function output values when $X_i = x_i$ and denoted as $\mu_{f|X_i}(x_i)$. We can further define the \textit{maximum conditional ambiguity} of function output values as
\begin{equation}
\label{eqn:max_cond_ambiguity_f}
\widehat{\mu}_{f|X_i} \stackrel{\textrm{def}}{=} \sup \{\mu_{f|X_i}(x_i) : x_i \in S_{X_i} \}
\end{equation}
maximum number of function output values possible over any value that $X_i$ can take over $S_{X_i}$. The definitions in \eqref{eqn:cond_ambiguity_set_f} and \eqref{eqn:max_cond_ambiguity_f} can be similarly extended to the situations where the conditioning is carried out over a subset $X_A$ of $\{X_1, \ldots, X_N\}$. We omit the discussion of such extensions here.

Further, when the value of $j^{\textrm{th}}$ bit-location in the binary-representation of $x_i, x_i \in S_{X_i}$, is known, that is $b^i_j = b, b \in \{0, 1\}$, we can define corresponding conditional ambiguity set of function output values as follows
\begin{equation}
\label{eqn:cond_ambiguity_set_f4Bit}
S_{f|b^i_j}(b) \stackrel{\textrm{def}}{=} \{f(x_1, \ldots, x_N): (x_1, \ldots, x_N) \in S_{X_1, \ldots, X_N} \mbox{ and } x_i \in S_{X_i|b^i_j}(b)\},
\end{equation}
with corresponding cardinality denoted as $\mu_{f|b^i_j}(b)$.

If the function $f$ is defined for every $X_A, X_A \subset \{X_1, \ldots, X_N\}$, then for a given support-set $S_{X_1, \ldots, X_N}$ of data-vectors, the functional $\lceil \log \mu_f \rceil$ is a valid information measure as it satisfies various axioms of such measures, such as expansibility, monotonicity, symmetry, subadditivity, and additivity, \cite{106klir_book}. We omit the details of proof for the sake of brevity .

In the Table~\ref{table:notation}, we summarize the notation used frequently in this section and in the rest of the paper.

\begin{table}[!t]
\centering
\caption{Notation used frequently in the paper}
\vspace{-0.1in}
\begin{tabular}{c | l }
\hline
$N$ & number of informants\\
$\cal X$ & discrete and finite alphabet set of cardinality $|{\cal X}|$\\
${\cal P}$ & $N$-dimensional discrete probability distribution, ${\cal P} = p(x_1, \ldots, x_N), x_i \in {\cal X}$\\
$X_i$ & random variable observed by $i^{\textrm{th}}$ informant. $X_i \in {\cal X}$\\
$S_{X_i}$ & \textit{ambiguity set} at the sink of $i^{\textrm{th}}$ informant's data, with corresponding \textit{ambiguity} $\mu_{X_i} = |S_{X_i}|$\\
$S_{X_1, \ldots, X_N}$ & \textit{ambiguity set} at the sink of all informants' data, with corresponding \textit{ambiguity} $\mu_{X_1, \ldots, X_N} = |S_{X_1, \ldots, X_N}|$\\
$S_{{X_1, \ldots, X_N}|I}$ & \textit{conditional ambiguity set} at the sink of all informant's data, when sink has information $I$, with corresponding \\
 & \textit{conditional ambiguity} $\mu_{X_{i}|I} = |S_{X_{i}|I}|$. The exact nature of $I$ will be obvious from the context\\
$S_f$ & \textit{ambiguity set} at the sink of the output values of function $f$, with corresponding \textit{ambiguity} $\mu_f = |S_f|$\\
$S_{f|I}$ & \textit{conditional ambiguity set} at the sink of the output values of function $f$ when sink has information $I$, with corresponding\\
 & \textit{conditional ambiguity} $\mu_{f|I} = |S_{f|I}|$\\
$\#_f$ & minimum number of informant bits required in the worst-case to compute the function $f$ at the sink\\
$\#_{DSC}$ & minimum number of informant bits required in the worst-case to solve the DSC problem at the sink\\\hline
\end{tabular}
\label{table:notation}
\vspace{-0.2in}
\end{table}

\section{Distributed Function Computation in Asymmetric Communication Scenarios}
\label{sec:probSetting}
Let us consider a distributed function computation scenario, where a sink computes some function of the data of $N$ correlated informants. We assume the \textit{asymmetric communication}, where the joint distribution ${\cal P}$ of informants' data is known only to the sink. The Figure~\ref{fig:twoInformants} depicts this scenario for $N=2$.

\begin{figure}[!t]
\centering
\includegraphics[width=4.0in]{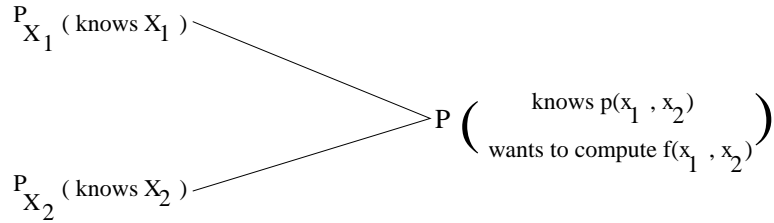}
\caption{Distributed function computation problem for two informants in asymmetric communication scenarios.}
\label{fig:twoInformants}
\vspace{-0.2in}
\end{figure}

\textbf{Problem Statement:} A sample $\overline{X} = (x_1, \ldots, x_N)$ is drawn \textit{i.i.d.} from a discrete and finite distribution $\cal P$ over $N$ binary strings, as in \cite{104chouPetrovic, 105adler}. The strings of $\overline{X}$ are revealed to $N$ informants, with the string $x_i, i \in \{1, \ldots, N\}$, being given to the $i^\textrm{th}$ informant. The sink wants to \textit{exactly} compute a deterministic function $f = f(\overline{X})$ of informants' data $\overline{X}$ (\textit{one-shot} computation problem). Our objective is to minimize the total number of informant bits required, \textit{in the worst-case}, to accomplish this.

\textbf{The Problem Setting:} We consider an \textit{asymmetric communication} scenario \cite{097kushilevitzNisan}. Communication takes place over $N$ binary, error-free channels, where each channel connects an informant with the sink. An informant and the sink can interactively communicate over the channel connecting them by exchanging messages (finite sequences of bits determined by agreed upon, deterministic protocol). The informants cannot communicate directly with each other, though. We assume that the communication between the sink and the informants proceeds in rounds, as in \cite{079yao}. In each round, depending on the information held by the communicators, one or other communicator may send the first message. However, we assume, as in \cite{090orlitsky}, that in each communication round, first the sink communicates to the informants and then, the informants respond with their messages. Each bit communicated over any channel is counted, as either a \textit{sink bit} if sent by the sink or an \textit{informant bit} if sent by an informant.

We assume the informants to be memoryless in the sense that they do not remember the messages they send in different rounds. We assume that $i^\textrm{th}$ informant knows its support-set $S_{X_i}$, so that it represents the binary string $x_i$, given to it, as $b_1^i \ldots b_{\lceil \log \mu_{X_i} \rceil}^i$ in $\lceil \log \mu_{X_i} \rceil$ bits.

The sink knows the distribution $\cal P$ and the corresponding support-sets: $S_{X_1, \ldots, X_N}$ of data-vectors and $S_f$ of function output values. So, every $\overline{X}, \overline{X} \in S_{X_1, \ldots, X_N}$, can be uniquely described using $\lceil \log \mu_{X_1, \ldots, X_N} \rceil$ bits and every $f(\overline{X})$ can be uniquely described using $\lceil \log \mu_f \rceil$ bits. This implies that to compute $f(\overline{X})$ unambiguously, the sink must receive at least $\lceil \log \mu_f \rceil$ bits from the informants, in the worst-case.

For the design and analysis of efficient communication protocols for distributed function computation, we develop a problem-encoding scheme as follows. Every informant data-vector $\overline{X}, \overline{X} \in S_{X_1, \ldots, X_N}$, can also be uniquely described by concatenating the bit-representations of all corresponding $x_i, 1 \le i \le N$. That is, $\overline{X}$ can be represented at the receiver by $\sum_{i = 1}^N \lceil \log \mu_{X_i} \rceil$ bits long representation, constructed by concatenating $i^\textrm{th}$ informant's $\lceil \log \mu_{X_i} \rceil$ bit-representation of $x_i$, for each $i \in \{1, \ldots, N\}$. With this encoding scheme, our distributed function computation problem reduces to minimizing the number $\#_f$ of bit-locations in the concatenated bit-representation of $\overline{X}$, whose values the sink needs to exactly compute $f(\overline{X})$. It should be noted that trivially, $\lceil \log \mu_f \rceil \le \#_f \le \sum_{i = 1}^N \lceil \log \mu_{X_i} \rceil$.

We illustrate this problem-encoding scheme with an example support-set in Figure~\ref{fig:probExa_f}. Let the informants 1 and 2 observe two correlated random variables $X_1$ and $X_2$, respectively, with $(X_1, X_2)$ derived from the support-set in first column. Let the function $f$ being computed at the sink be `bitwise OR' of the instance of $(X_1, X_2)$ revealed to the informants. For the given support-set, at least $\lceil \log \mu_{X_1, X_2} \rceil = 4$ bits are required to describe any element of $S_{X_1, X_2}$ and at least $\lceil \log \mu_f \rceil = 3$ bits are required to describe any element of $S_f$. Also, to individually describe any value assumed by $X_1$ and $X_2$, it requires $3$ bits.

For any given support-set of data-vectors, sink that knows the joint distribution $\cal P$, can construct a problem-encoding as in Figure~\ref{fig:probExa_f}. It knows that one string, hitherto unknown, from the fourth column is drawn, with first $\lceil \log \mu_{X_1} \rceil$ bits given to informant 1, next $\lceil \log \mu_{X_2} \rceil$ bits given to informant 2, and so on. We require the sink to exactly evaluate the given function $f$ on this string, whose different parts are held by different informants, with the informants sending minimum total number of bits to the sink.

\begin{figure*}[!t]
\centering
\includegraphics[width=7.0in]{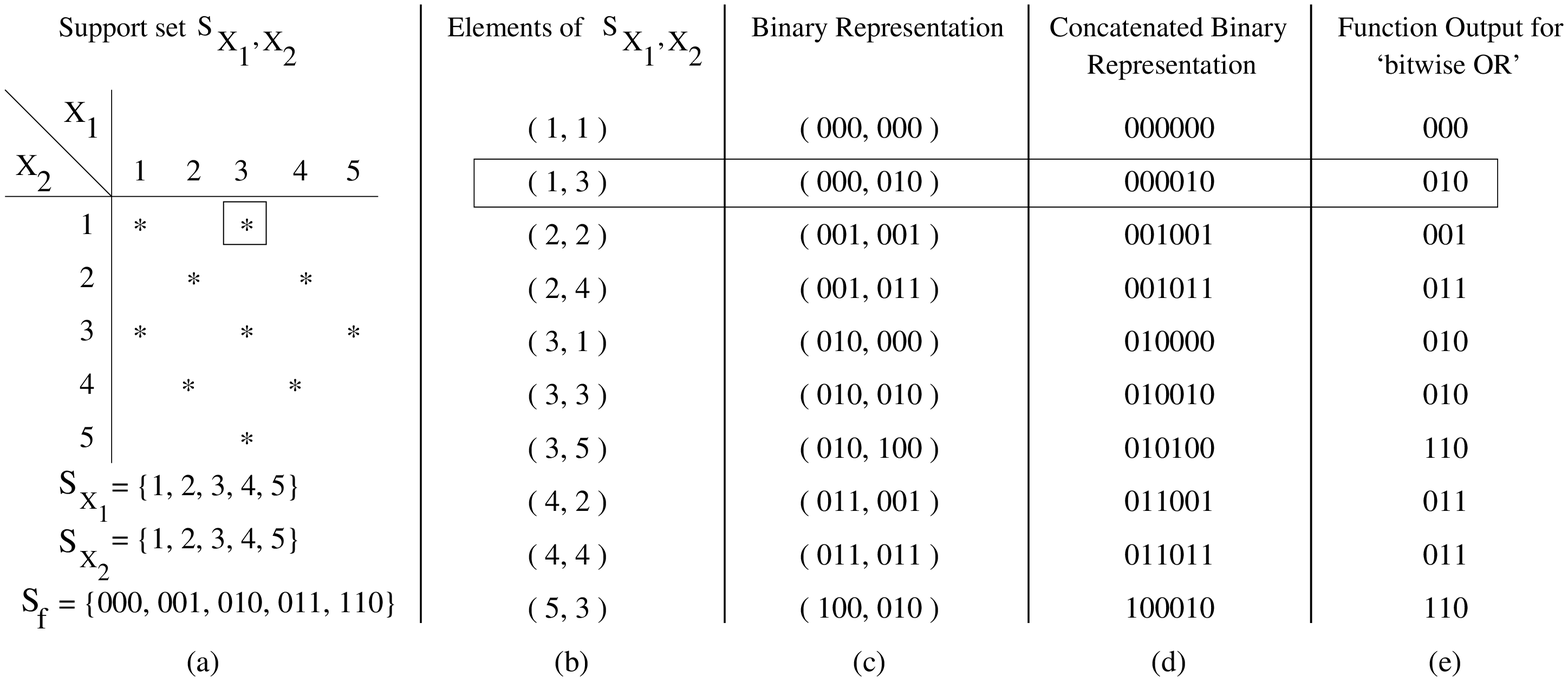}
\vspace{-0.1in}
\caption{Example of problem encoding: (a) Support-sets: $S_{X_1, X_2}, S_{X_1}, S_{X_2}, \mbox{ and } S_f$ with $\mu_{X_1, X_2} = 10, \mu_{X_1} = \mu_{X_2} = 5, \mu_f = 5$ (b) the members of $S_{X_1, X_2}$ (c) binary representation of members of $S_{X_1, X_2}$ (d) the concatenated binary representation. If the string `$000010$' is drawn, then `$000$' is given to informant 1 and `$010$' is given to informant 2. (e) Function output values corresponding to $\overline{X}$ for `bitwise OR'.}
\label{fig:probExa_f}
\vspace{-0.2in}
\end{figure*}

\textit{Note on the terminology:} We call a bit-location in the bit-string at an informant (as well as in the bit-representation of $\overline{X}$ in encoding scheme defined above) \textit{defined}, if the sink knows its value unambiguously, otherwise it is called \textit{undefined}. For example, until the sink learns of the actual $\overline{X}$ revealed to the informants, one or more bits in the $\sum_{i = 1}^N \lceil \log \mu_{X_i} \rceil$ bits long representation of $\overline{X}$, remain \textit{undefined}. Similarly, a bit-location in the bit-representation of the output of the function $f$ is called \textit{evaluated} if the sink can unambiguously compute its value based on the values of one or more bits in informant strings.

\section{Communication Protocol for Distributed Function Computation}
\label{sec:commProtocol}
We address the distributed function computation problem, introduced in the last section, in \textit{bit-serial} communication scenarios, where in each communication round, only one informant can send only one bit to the sink. This is an example of scenarios where communication takes place over a channel with uplink throughput constrained to one bit per channel use. Our interest in this communication model stems from it allowing us to compute the minimum number of informant bits (total and individual) required to compute $f(\overline{X})$ at the sink when any number of rounds and sink bits can be used. In other words, this communication scenario enables us to compute the worst-case achievable rate-region for this problem, as we show later in this section.

We provide a constructive solution of the distributed function computation problem of the last section, based on interactive communication. The proposed protocol optimally solves this problem and computes the worst-case achievable rate-region. We call the proposed protocol ``\textbf{b}it-\textbf{ser}ial \textbf{f}unction \textbf{Comp}utation (\textbf{bSerfComp})'' protocol and describe it next.

\subsection{The \textbf{bSerfComp} protocol}
\label{subsec:bSerfCompAL}
In \textbf{bSerfComp} protocol, in each communication round only one bit is sent by the informant chosen to communicate with the sink. The chosen bit has the property that it divides the size of the current conditional ambiguity set of function output values, at the sink, closest to half\footnote{For $n$-ary representation of data-values, this will be $1/n$.}. Formally, in terms of the problem statement and encoding introduced in the last section, if $U$ is the set of \textit{undefined} bits in $\sum_{i = 1}^N \lceil \log \mu_{X_i} \rceil$ bits long representation of $\overline{X}$, then the bit chosen in $l^{\textrm{th}}, l \ge 0$, round is the one that solves $\argmin_{j \in U} \max_{b(j) \in \{0, 1\}} \mu^l_{f|b(j)}$. The sink, after receiving the value of the chosen bit, recomputes the set of undefined bits $U$. This is carried out iteratively till all bits in $\lceil \log \mu_f \rceil$ bits long representation of $f(\overline{X})$ are not \textit{evaluated}.

\hspace{-1.0em}\hrulefill

\hspace{-0.5em}{\textbf{Algorithm:} bSerfComp }

\vspace{-0.2cm}\hspace{-1.0em}\hrulefill
\begin{codebox}
\li $l = 0$
\li Let $S_{X_1,\ldots, X_N}^l = S_{X_1,\ldots, X_N}$
\li Let $S_f^l = S_f$, $\mu_f^l = |S_f^l|$
\li Let $V = \{1, \ldots, \sum_{i = 1}^{N} \lceil \log \mu_i \rceil\}$
\li Let $U$ be the set of undefined bits in $V$, $U \subseteq V$, over all $\overline{X} \in S_{X_1,\ldots, X_N}^l$
\li \While ($\mu_f^l > 1$)
\li 	$K^{l+1} = \argmin_{j \in U} \max_{b(j) \in \{0, 1\}} \mu^l_{f|b(j)}$
\li 	Choose the bit-location corresponding to $k^{l+1}$, where $k^{l+1}$ is a randomly chosen element of $K^{l+1}$
\li 	The sink asks the informant corresponding to bit-location $k^{l+1}$ to send the bit-value $b(k^{l+1})$ \label{bSerf:bitValue}
\li 	Set $S_{X_1,\ldots, X_N}^{l+1} = S_{{X_1,\ldots, X_N}|b(k^{l+1})}^l$
\li 	Set $S_f^{l+1} = S_{f|b(k^{l+1})}^l$
\li 	Compute $U \subset V$, the set of undefined bits
\li 	$l = l + 1$
    \End
\end{codebox}
\vspace{-0.2cm}\hrulefill

The sink can perform the worst-case performance analysis of the \textbf{bSerfComp} protocol by selecting on the line~\ref{bSerf:bitValue}, $b^*(k^{l+1})$ that solves:
\begin{equation*}
b^*(k^{l+1}) = \argmax\limits_{s = \{0, 1\}} \mu^l_{f|b(k^{l+1})=s}
\end{equation*}

Note that there are two versions of the \textbf{bSerfComp} protocol: in the \textit{online} version, the sequence of queries from the sink to the informants is determined adaptively depending on the informant response in the previous rounds, while in the \textit{offline} version, for a given support-set of data-vectors the entire sequence of queries is determined before actual querying starts. For example, the sequence of queries generated for the worst-case analysis of the protocol corresponds to the offline version.

\subsection{Optimality of \textbf{bSerfComp} protocol}
\label{subsec:bSerfCompOptimality}
The binary representations of the elements of $S_f$, as in Figure~\ref{fig:probExa_f}.e, can be arranged as the leaves of a binary tree, where ambiguity set of function output values $S_f$ forms the root and conditional ambiguity sets of function output values form internal nodes and leaves. The set of function output values corresponding to a child node is obtained by conditioning the set of function output values corresponding to its parent node on the value $b, b \in \{0, 1\}$ of $b^i_j$: $j^{\textrm{th}}$ bit-location in the binary string revealed to $i^{\textrm{th}}$ informant, with `$b = 0$' leading to the left subtree and `$b = 1$' leading to the right subtree. Such a binary tree with $\mu_f$ leaves will have a minimum-height of $\lceil \log \mu_f \rceil$, implying that at least $\lceil \log \mu_f \rceil$ bits are required to describe any leaf, in the worst-case.

\begin{pavikl}
\label{lemma:binTrees}
\textbf{bSerfComp} protocol computes all minimum-height binary trees corresponding to the given support-set to \textit{exactly} evaluate a given function $f$.
\end{pavikl}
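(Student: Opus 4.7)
The plan is to prove this by a structural induction on the depth of the conditional ambiguity tree for $f$, tying the greedy argmin rule of \textbf{bSerfComp} to the characterization of minimum-height binary trees over the given support-set.

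First, I would recall that any binary tree whose $\mu_f$ leaves are the elements of $S_f$ and whose internal nodes correspond to conditioning on successive undefined informant bit-locations must have height at least $\lceil \log \mu_f \rceil$. Such a tree attains minimum height if and only if, at every internal node carrying a conditional ambiguity set of cardinality $m$, the split $(a, m-a)$ induced by the chosen bit satisfies $\max(a, m-a) \le 2^{\lceil \log m \rceil - 1}$, i.e., is as balanced as the available bit-conditionings permit.

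Second, I would interpret the greedy rule of \textbf{bSerfComp}. In round $l+1$ on a subproblem with conditional ambiguity $m = \mu_f^l$, the set
\[
K^{l+1} = \argmin_{j \in U}\, \max_{b(j) \in \{0,1\}} \mu^l_{f|b(j)}
\]
consists exactly of those undefined bit-locations whose conditioning minimizes $\max(a, m-a)$ over $j \in U$. Consequently, any undefined bit that admits an optimally balanced split (in the sense of the previous paragraph) lies in $K^{l+1}$, and conversely every element of $K^{l+1}$ realizes such an optimal split.

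Third, I would prove the forward direction -- every tree constructed by \textbf{bSerfComp} has minimum height -- by induction on the subtree size $\mu_f^l$. The root split is optimally balanced by Step 2, each child subtree corresponds to a strictly smaller conditional-ambiguity problem on the residual undefined bits, and the inductive hypothesis applies. For the converse -- every minimum-height tree is producible by \textbf{bSerfComp} under some sequence of random choices -- I would argue that any such tree must at its root perform an optimally balanced split using some bit-location $j^* \in U$; by Step 2, $j^* \in K^{l+1}$, so selecting $k^{l+1} = j^*$ is consistent with the protocol, and recursing on each child produces the desired tree by induction.

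The main obstacle I anticipate lies in being careful with the notion of ``minimum-height'' itself: it must be taken relative to the class of binary trees realizable by bit-wise conditioning on the informants' encoded data, not over arbitrary binary trees with $\mu_f$ leaves. This subtlety affects the characterization of optimal splits in Step 1, which must be phrased in terms of achievable $(a, m-a)$ partitions, and the inductive step, which must account for the shrinking set of undefined bits $U$ across rounds while preserving the correspondence between locally greedy choices and globally minimum tree height.
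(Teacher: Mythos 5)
Your Step 2 equivalence is where the argument breaks. The set $K^{l+1} = \argmin_{j\in U}\max_{b(j)\in\{0,1\}}\mu^l_{f|b(j)}$ contains only those bit-locations achieving the \emph{minimum} value of the worst-branch cardinality, whereas your Step 1 characterization of a minimum-height tree only requires the root split $(a,m-a)$ to satisfy $\max(a,m-a)\le 2^{\lceil\log m\rceil-1}$. These two conditions do not coincide. Take $m=6$: if one undefined bit induces the split $(3,3)$ and another induces $(2,4)$, the greedy rule puts only the first bit in $K^{l+1}$, yet both splits are compatible with the minimum height $\lceil\log 6\rceil = 3$, since $1+\lceil\log 4\rceil = 3$. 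Hence a minimum-height tree rooted at the $(2,4)$ bit is never generated by the protocol under any sequence of random choices, and the converse direction of your induction (``every minimum-height tree is producible'') fails at the very first step. The statement ``any undefined bit that admits an optimally balanced split lies in $K^{l+1}$'' is simply false.

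The forward direction has a second, independent gap: minimizing $\max_{b}\mu^l_{f|b(j)}$ locally at each node is a greedy rule, and greedy balance does not by itself yield globally minimum height. The most balanced split at the root may lead to child subproblems in which only badly unbalanced splits remain among the residual undefined bits, while a less balanced root split could admit near-perfect splits thereafter; your inductive step silently assumes that the children of the greedy split have recursive minimum heights no worse than the children of any alternative split, which is precisely what needs to be proved (and is false for greedy decision-tree construction in general). You are right that ``minimum height'' must be taken relative to the trees realizable by bit-conditioning, but flagging that subtlety does not repair either direction. For context, the paper's own proof is the single sentence that the claim ``follows from the definition,'' so it supplies no argument you could have matched; still, as written your proposal does not close these gaps.
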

\begin{IEEEproof}
Follows from the definition of minimum-height binary trees and the description of \textbf{bSerfComp} protocol.
\end{IEEEproof}

\begin{pavikl}
\label{lemma:minBits}
\textbf{bSerfComp} protocol computes $b_i$, the minimum number of bits that the $i^{\textrm{th}}, i \in \{1, \ldots, N\}$, informant must send to let the sink \textit{exactly} evaluate the function $f$.
\end{pavikl}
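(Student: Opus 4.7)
My plan is to prove the lemma by leveraging Lemma 1 (which shows that \textbf{bSerfComp} generates every minimum-height binary tree over the given support) together with a per-informant counting argument along root-to-leaf paths. First, I would fix what $b_i$ means: the minimum, over all correct interactive protocols, of the worst-case number of bits that informant $i$ sends before the sink can \emph{exactly} evaluate $f$. For any such protocol, the induced query tree must separate all elements of $S_f$ at its leaves, so its height is at least $\lceil \log \mu_f \rceil$; the greedy splitting rule (choosing a bit that brings $\max_{b(j)} \mu^l_{f|b(j)}$ closest to $\mu^l_f/2$) is exactly the condition under which equality can hold at each level.

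Second, I would observe that in any execution of \textbf{bSerfComp}, the worst-case number of queries directed to informant $i$ is precisely the maximum, over root-to-leaf paths of the produced tree, of the count of bit-locations belonging to $i$'s binary string along that path. Since the argmin set $K^{l+1}$ may contain bit-locations from several informants that all produce the same optimal split of $S^l_f$, the tie-breaking freedom in line 8 of the algorithm allows the sink to prefer a bit belonging to any particular informant whenever this preference is consistent with the greedy split. Combined with Lemma 1, this means that every minimum-height binary tree, and hence every assignment of bit-counts to informants consistent with optimal splitting, is realizable by some execution of the protocol.

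Third, for the matching lower bound I would argue by contradiction: suppose some correct protocol $\Pi'$ had informant $i$ sending strictly fewer bits in the worst case than any tree produced by \textbf{bSerfComp}. The query tree of $\Pi'$ is a binary tree whose leaves distinguish $S_f$, so it has height at least $\lceil \log \mu_f \rceil$; if its height equals $\lceil \log \mu_f \rceil$ then by Lemma 1 it already appears among the trees generated by \textbf{bSerfComp}, contradicting the assumed strict inequality, and if its height exceeds $\lceil \log \mu_f \rceil$ then $\Pi'$ uses more total bits in the worst case than the optimal value, so in particular there is a path on which some informant, and not necessarily $i$, sends more bits than the \textbf{bSerfComp} counterpart, contradicting the premise that $\Pi'$ improves on the $i^{\textrm{th}}$ count. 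Either way the assumption fails, establishing that $b_i$ is achievable by \textbf{bSerfComp}.

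The main obstacle I foresee is that per-informant optimality is not automatically a joint optimality: it is entirely possible that one minimum-height tree minimizes $b_1$ while another minimizes $b_2$, with no single execution simultaneously attaining every $b_i$. The lemma should therefore be read as asserting that for each $i$ individually, \emph{some} execution of \textbf{bSerfComp} realizes $b_i$; this reading is in fact exactly what is needed to trace out the worst-case achievable rate-region promised in Section~\ref{sec:commProtocol}. I would make this interpretation explicit at the outset of the proof, and once this is clarified the argument reduces cleanly to picking, for each $i$, a tie-breaking rule in line 8 that defers queries to informant $i$ whenever possible, together with the enumeration guaranteed by Lemma 1.
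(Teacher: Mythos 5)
Your overall strategy is the same one the paper uses: the greedy split keeps the total query count at the minimum (Lemma~\ref{lemma:binTrees}), and the tie-breaking freedom in line 8 is exactly the paper's ``procrastinate querying the bits from the concerned informant until it can be postponed no more.'' You also correctly flag, and resolve, the interpretive point the paper leaves implicit --- that $b_i$ is realized by \emph{some} execution for each $i$ separately, not by one execution simultaneously for all $i$ --- and that reading is indeed what the rate-region corollary requires.

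However, your lower-bound argument (third step) has a genuine gap in its second case. You suppose a correct protocol $\Pi'$ in which informant $i$ sends strictly fewer bits than in any \textbf{bSerfComp} tree, and in the case where the query tree of $\Pi'$ has height exceeding $\lceil \log \mu_f \rceil$ you conclude that ``some informant, not necessarily $i$, sends more bits\ldots contradicting the premise that $\Pi'$ improves on the $i^{\textrm{th}}$ count.'' That is not a contradiction: the premise concerned only informant $i$'s count, and a protocol is free to spend more total bits (e.g., by first querying every bit-location of the informants other than $i$) precisely in order to reduce what informant $i$ must send. Indeed, the true minimum for informant $i$ is governed by the residual ambiguity $\widehat{\mu}_{f|X_{-i}}$ of the function output after all other informants' strings are known, and the tree attaining it need not be minimum-height. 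So your case split does not rule out that $b_i$ is attained only off the minimum-height trees, and you still need a separate argument that the greedy rule's tie-set always permits deferring informant $i$'s bits long enough to reach that per-informant minimum (the paper asserts this via ``procrastination'' but does not prove it either). Without that, the matching lower bound is not established.
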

\begin{IEEEproof}
The \textbf{bSerfComp} protocol exploits the bit-serial communication scenario where a bit queried from the chosen informant maximally conditions the resultant ambiguity set of function output values at the sink. Also, to reduce the number of bits that an informant sends, the \textbf{bSerfComp} protocol can procrastinate querying the bits from the concerned informant until it can be postponed no more, thus maximally reducing the number of bits an informant sends. Combining these two observations, proves the lemma.
\end{IEEEproof}

\begin{pavikl}
\label{lemma:rateRegion}
For a given support-set, each corner point of the worst-case achievable rate-region for computing function $f$ corresponds to at least one minimum-height binary tree, with height $\#_f$.
\end{pavikl}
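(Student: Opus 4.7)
The plan is to read off the structure of corner points from Lemmas~\ref{lemma:binTrees} and~\ref{lemma:minBits}. First I would cast any deterministic bit-serial protocol as a binary decision tree whose internal nodes are labeled by informant bit-locations and whose leaves are labeled by the (uniquely determined) value of $f$ on the associated root-to-leaf bit-value sequence. The height of this tree equals the worst-case total informant rate, and the maximum count, over root-to-leaf paths, of internal nodes labeled with a bit-location of informant $i$ equals the worst-case rate $b_i$ of informant $i$. The worst-case achievable rate-region is then the set of vectors $(b_1,\ldots,b_N)\in\mathbb{Z}_{\ge 0}^N$ for which such a tree exists with at most $b_i$ queries to informant $i$ on every root-to-leaf path; a corner point is a Pareto-optimal element of this region.

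Fix a corner point $(b_1^\ast,\ldots,b_N^\ast)$ and let $T$ be any tree achieving it. I would show that $T$ must have height exactly $\#_f$. The lower bound is immediate since $\#_f$ is the minimum worst-case total informant load required for exact computation of $f$. For the upper bound, suppose for contradiction that the height of $T$ strictly exceeds $\#_f$. Then by Lemma~\ref{lemma:minBits}, one can run \textbf{bSerfComp} with its $\argmin$ tie-breaking tuned to procrastinate queries to the informant whose worst-case rate is being exceeded in $T$. The resulting tree $T'$ is minimum-height by Lemma~\ref{lemma:binTrees}, hence of height $\#_f$; I would then verify that the rate vector of $T'$ coordinate-wise dominates $(b_1^\ast,\ldots,b_N^\ast)$ with strict inequality in at least one coordinate, contradicting the Pareto-optimality of the corner point. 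Hence $T$ has height $\#_f$, and by Lemma~\ref{lemma:binTrees} it is realized by an execution of \textbf{bSerfComp}. The phrase ``at least one'' in the statement absorbs the freedom in tie-breaking, which can yield several distinct minimum-height trees producing the same worst-case rate vector.

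The main obstacle I foresee is the dominance step — showing that the \textbf{bSerfComp}-produced tree $T'$ simultaneously beats the offending coordinate of $(b_1^\ast,\ldots,b_N^\ast)$ without raising any other coordinate above its corner-point value. The cleanest route is an exchange argument at the root of $T$: I would swap the root query with a query below it along a longest path, argue via Lemma~\ref{lemma:minBits} applied locally at the root that the swap does not inflate the worst-case rate of any informant, and recurse into the two subtrees until the transformed tree coincides with a \textbf{bSerfComp} execution. This localizes the dominance verification to a sequence of single-level swaps, each a direct consequence of Lemma~\ref{lemma:minBits}, and yields both the claimed minimum-height correspondence and the height value $\#_f$.
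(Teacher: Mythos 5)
Your overall strategy --- model every protocol as a decision tree, take a tree $T$ achieving a given corner point, and derive a contradiction with Pareto-optimality by exhibiting a \textbf{bSerfComp} tree $T'$ that does at least as well in every coordinate and strictly better in one --- is a genuinely different route from the paper's. The paper argues in the opposite direction: it assumes a corner point lies outside the region spanned by the minimum-height trees, infers that at such a point some informant $i$ must be sending fewer than $b_i$ bits, and contradicts the minimality of $b_i$ established in Lemma~\ref{lemma:minBits}. It never constructs a dominating tree and never reasons about the height of $T$ at all; in particular it does not attempt to prove your intermediate claim that every corner point has total rate exactly $\#_f$.

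The gap in your version is exactly the step you flag: the dominance verification. You propose to discharge it by a sequence of root-level swaps, justifying each swap ``via Lemma~\ref{lemma:minBits} applied locally.'' But Lemma~\ref{lemma:minBits} is a global statement about what \textbf{bSerfComp} outputs (for each $i$ it can realize the minimum possible $b_i$); it says nothing about whether transposing two queries inside an arbitrary tree preserves every informant's worst-case per-informant count, and in general such a swap can shorten the longest path for one informant while lengthening it for another. So ``each swap is a direct consequence of Lemma~\ref{lemma:minBits}'' is not a valid inference, and without it you have not shown that $T'$ simultaneously beats the offending coordinate and keeps every other coordinate at or below its corner-point value --- which is precisely the content of the lemma being proved. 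Moreover, Lemma~\ref{lemma:minBits} guarantees each $b_i$ is attainable by \emph{some} run of \textbf{bSerfComp}, not that a single run attains a prescribed vector of coordinates, so even the existence of the tie-broken $T'$ you invoke needs an argument. If you keep your route, the exchange/dominance step requires its own proof; it cannot be borrowed from Lemma~\ref{lemma:minBits}.
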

\begin{IEEEproof}
For the sake of contradiction, let us assume that there is a corner point of the worst-case achievable rate-region to which no minimum-height binary tree corresponds to. This implies that this corner point is outside the worst-case rate-region defined by the set of all the corner points visited by the set of minimum-height binary trees. This further implies that at this corner point at least one informant, say $i^{\textrm{th}}$, sends fewer bits than $b_i$ (defined in the statement of Lemma~\ref{lemma:minBits} above). However, this contradicts the definition of $b_i$, that it is the minimum number of bits $i^{\textrm{th}}$ informant needs to send to let the sink \textit{exactly} evaluate the function $f$. Thus, there cannot be any corner point outside the rate-region defined by the set of corner points corresponding to the set of all minimum-height binary trees, hence proving the lemma.
\end{IEEEproof}

\begin{pavikt}
\label{thrm:bSerfComp}
For a given support-set, \textbf{bSerfComp} protocol computes the worst-case achievable rate-region for function $f$.
\end{pavikt}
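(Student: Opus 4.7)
The plan is to combine Lemmas \ref{lemma:binTrees}--\ref{lemma:rateRegion} to show that the set of rate-tuples realized across executions of the \textbf{bSerfComp} protocol is exactly the set of corner points of the worst-case achievable rate-region, and then to observe that the rate-region itself is the componentwise upward-closure of this set of corner points. Once this correspondence is established, the theorem follows immediately.

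First, I would argue the ``every corner point is hit'' direction. By Lemma \ref{lemma:rateRegion}, every corner point of the worst-case achievable rate-region corresponds to at least one minimum-height binary tree of height $\#_f$. By Lemma \ref{lemma:binTrees}, as the tie-breaking choice of $k^{l+1}$ inside the argmin set $K^{l+1}$ is allowed to range, \textbf{bSerfComp} generates all minimum-height binary trees for the given support-set; in particular, each tree corresponds to a well-defined rate-tuple $(b_1^{(k)}, \ldots, b_N^{(k)})$ obtained by counting per-informant bits along the worst-case root-to-leaf path. Composing these two statements, every corner point is realized by some execution of \textbf{bSerfComp}.

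Next, I would argue that nothing outside this enumerated collection can be a corner point, so the rate-region equals the upward-closure of the protocol-generated tuples. Lemma \ref{lemma:minBits} gives that $b_i$ is an intrinsic, protocol-independent lower bound on what the $i^\textrm{th}$ informant can transmit while still letting the sink evaluate $f$, so any achievable tuple must dominate the per-informant vector of some minimum-height binary tree; otherwise, restricting to that tuple's bits would produce an evaluating binary tree of sub-minimum height, contradicting the definition of $\#_f$. Combined with the previous step, this proves that the worst-case rate-region is precisely the upward-closure of the set of tuples generated by \textbf{bSerfComp}, which is what the theorem asserts.

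The main obstacle, in my view, is less the combinatorics and more the clean formalization: the worst-case one-shot rate-region is a discrete subset of $\mathbb{Z}_{\ge 0}^N$ rather than a convex polytope, so one has to be careful that the ``corner points plus upward-closure'' description is genuinely the right characterization, and that the worst-case per-informant bit count along an execution of \textbf{bSerfComp} is well-defined across all possible bit-value responses. Once these definitional hooks are pinned down, the two-direction argument collapses to a direct appeal to the three preceding lemmas.
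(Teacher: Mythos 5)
Your proposal is correct and follows essentially the same route as the paper: the paper's proof simply combines Lemma~\ref{lemma:binTrees} (the protocol generates all minimum-height binary trees) with Lemma~\ref{lemma:rateRegion} (every corner point corresponds to such a tree) to conclude that every corner point is computed, exactly as in your first direction. Your additional converse step via Lemma~\ref{lemma:minBits} and the upward-closure observation are already implicit in the paper's (contradiction-based) proof of Lemma~\ref{lemma:rateRegion}, so you are just making explicit what the paper leaves folded into that lemma.
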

\begin{IEEEproof}
Combining the statements of Lemmas~\ref{lemma:binTrees} and \ref{lemma:rateRegion}, we can state that \textbf{bSerfComp} protocol computes each corner point of the worst-case achievable rate-region. Thus, \textbf{bSerfComp} protocol computes the worst-case achievable rate-region for computing function $f$.
\end{IEEEproof}

The worst-case achievable rate-region for distributed computation of function $f$ in asymmetric communication scenarios is given by the following corollary to Theorem~\ref{thrm:bSerfComp}. For the sake of notational simplicity, we state it only for $N=2$.
\begin{pavikc}
For $N = 2$, if $b_i$ denotes the minimum number of bits that an informant $i, 1 \le i \le 2$, sends over all solutions of \textbf{bSerfComp} protocol and $\#_f$ denotes the total number of bits sent by all informants, then the worst-case achievable rate region is given by:
\begin{eqnarray*}
R_1 & \ge & b_1 \\
R_2 & \ge & b_2 \\
R_1 + R_2 & \ge & \#_f
\end{eqnarray*}
\end{pavikc}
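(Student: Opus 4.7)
The plan is to prove the corollary by establishing both a converse (any achievable rate pair must satisfy the three inequalities) and achievability (any rate pair satisfying them is achievable), invoking the lemmas and Theorem~\ref{thrm:bSerfComp} already established in this section.

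For the converse, the individual rate bounds are immediate: Lemma~\ref{lemma:minBits} shows that $b_i$ is the minimum number of bits informant $i$ must transmit for the sink to exactly compute $f$ in the worst case, so any protocol with $R_i < b_i$ is infeasible for either $i \in \{1, 2\}$. The sum-rate bound $R_1 + R_2 \geq \#_f$ follows directly from the characterization at the end of Section~\ref{sec:probSetting}: since $\#_f$ is the minimum total number of informant bit-positions the sink must learn in order to exactly evaluate $f(\overline{X})$ in the worst case, any protocol attaining a rate pair $(R_1, R_2)$ necessarily uses at least $\#_f$ informant bits in total.

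For achievability, I would combine Theorem~\ref{thrm:bSerfComp} with Lemma~\ref{lemma:rateRegion}: every corner point of the worst-case rate region corresponds to some minimum-height binary tree produced by \textbf{bSerfComp}, and all such trees have height exactly $\#_f$. For $N = 2$, the polytope defined by the three stated inequalities has two extreme points on its dominant face, namely $(b_1, \#_f - b_1)$ and $(\#_f - b_2, b_2)$. I would show that each of these is attained by a specific run of \textbf{bSerfComp}: the first by breaking ties in the argmin step so as to defer queries to informant $2$ as long as possible, and the second symmetrically. Every other rate pair satisfying the inequalities componentwise dominates one of these extreme points and hence is achievable by padding the chosen protocol with additional (uninformative) bits.

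The main obstacle I anticipate is justifying that both corner points are simultaneously realizable within the same protocol family, i.e., that \textbf{bSerfComp} with varying tie-breaking can achieve $R_1 = b_1$ in one instance and $R_2 = b_2$ in another while in each case keeping the total at $\#_f$. This is resolved by Lemma~\ref{lemma:binTrees}, which guarantees that \textbf{bSerfComp} enumerates all minimum-height binary trees, each of total height $\#_f$; hence one may freely pick the tree that minimizes informant $1$'s contribution to obtain $(b_1, \#_f - b_1)$, or the one that minimizes informant $2$'s contribution to obtain $(\#_f - b_2, b_2)$. With both corner points attained, the entire region $\{(R_1, R_2) : R_1 \geq b_1,\, R_2 \geq b_2,\, R_1 + R_2 \geq \#_f\}$ is covered, completing the proof.
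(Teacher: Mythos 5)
Your proposal is correct and follows essentially the same route as the paper, which simply invokes the worst-case optimality of \textbf{bSerfComp} from Theorem~\ref{thrm:bSerfComp} (itself resting on Lemmas~\ref{lemma:binTrees}--\ref{lemma:rateRegion}). You spell out considerably more than the paper does --- the converse/achievability split, the explicit corner points $(b_1, \#_f - b_1)$ and $(\#_f - b_2, b_2)$, and the tie-breaking argument for realizing each --- but every ingredient you use is one the paper's own one-line proof implicitly relies on.
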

\begin{proof}
The proof follows from the worst-case optimality of \textbf{bSerfComp} protocol proven in Theorem~\ref{thrm:bSerfComp}.
\end{proof}

In Figures~\ref{fig:dist6fComp}-\ref{fig:dist5fComp}, using \textbf{bSerfComp} protocol, we compute the worst-case achievable rate-regions for functions: `bitwise OR', `bitwise AND', and `bitwise XOR', evaluated at sink over two support-sets of data-vectors for two correlated informants\footnote{In computer programming literature, it is well-know how to compute the bitwise functions over two binary strings, \cite{088k&r_book}. Let $B(X_i, X_j)$ denote the output of the bitwise function $B$ evaluated over binary-strings corresponding to $X_i$ and $X_j$, $i, j \in \{1, \ldots, N\}, i \neq j$. Define $B(X_i) = X_i$. Then, the evaluation of $B$ over any number $N, N \ge 2$, of arguments can be recursively defined, for example, as: $B(X_1, \ldots, X_N) = B(B(X_1, \ldots, X_{N-1}), X_N)$.}.

\subsection{Performance bounds for \textbf{bSerfComp} protocol}
\label{subsec:bSerfCompBounds}
To compute the bounds on the performance of \textbf{bSerfComp} protocol, we make use of an interesting and important observation regarding the working of \textbf{bSerfComp} protocol to compute a given function $f$ at the sink for a given support-set of data-vectors.

\textit{Observation:} A bit-location in the bit-representation of the function output values can be \textit{evaluated}, without all bit-locations in the concatenated bit-representation of $\overline{X}$ being \textit{defined}.

Let $\#_f$ and $\#_{DSC}$ denote the minimum number of informant bits required, in the worst-case, to evaluate the function $f$ and to solve the DSC problem, respectively, at the sink for a given support-set of data-vectors.

\textbf{\textit{Loose Bounds:}} As we discussed before, $\#_f$ is bounded from below by $\lceil \log \mu_f \rceil$, that is
\begin{equation*}
\lceil \log \mu_f \rceil \le \#_f
\end{equation*}

Let $\mu_s$ be the number of data-vectors or informant strings which evaluate to the function output value $s, s \in S_f$. Also, let us define $\mu^{min}_s = \min_{s \in S_f} \mu_s$. Then, assuming that the function $f$ evaluates to the output that corresponds to $\mu^{min}_s$, we obtain a trivial but useful upper bound on $\#_f$ as:
\begin{equation*}
\#_f \le \#_{DSC} - \lceil \log \mu^{min}_s \rceil
\end{equation*}

Therefore, combining above two bounds on $\#_f$, we can say that $\#_f$ loosely satisfies:
\begin{equation}
\label{eqn:looseBounds}
\lceil \log \mu_f \rceil \le \#_f \le \#_{DSC} - \lceil \log \mu^{min}_s \rceil
\end{equation}

\begin{figure}[!t]
\centering
\includegraphics[width=7.0in]{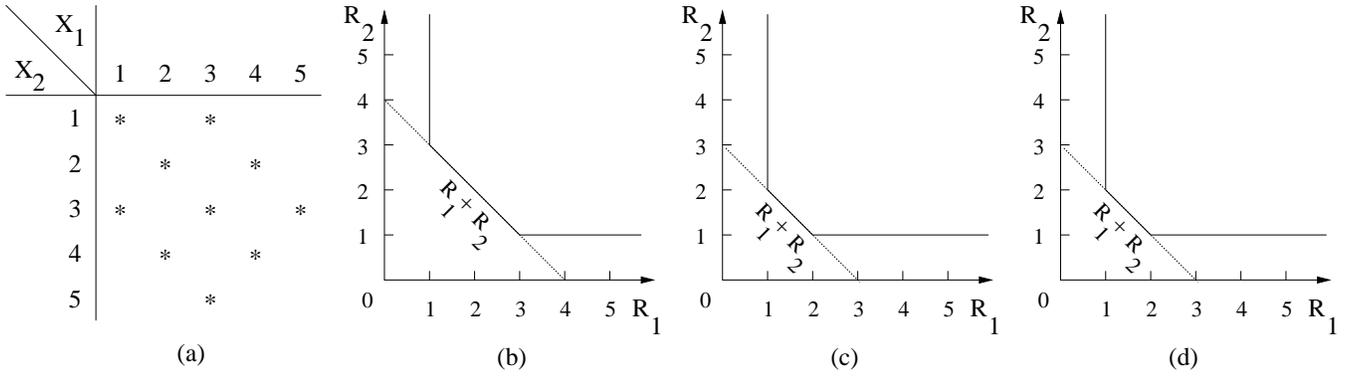}
\vspace{-0.1in}
\caption{Distributed function computation - I: support-set of data-vectors (a) and worst-case achievable rate-regions for `bitwise OR' in (b), for `bitwise AND' in (c), and for `bitwise XOR' in (d)}
\label{fig:dist6fComp}
\end{figure}

\begin{figure}[!t]
\centering
\includegraphics[width=7.0in]{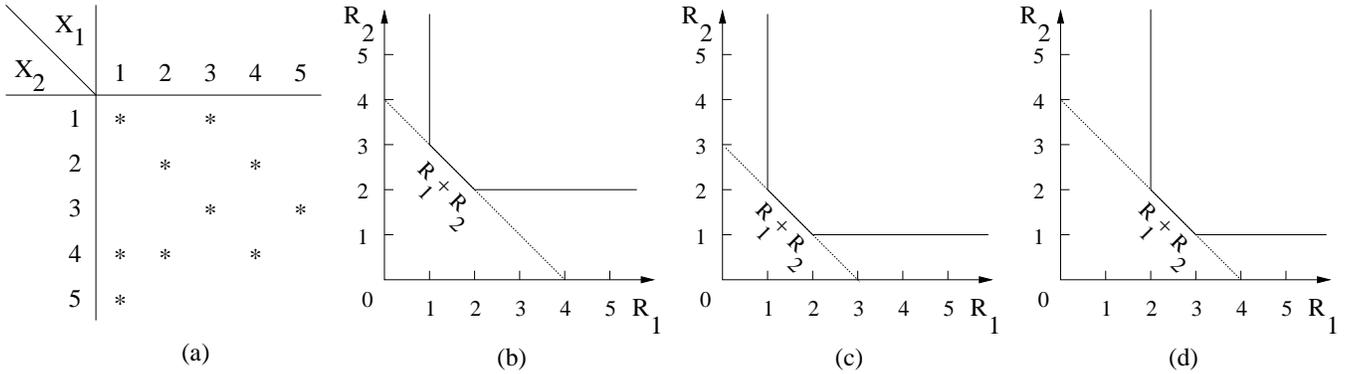}
\vspace{-0.1in}
\caption{Distributed function computation - II: support-set of data-vectors (a) and worst-case achievable rate-regions for `bitwise OR' in (b), for `bitwise AND' in (c), and for `bitwise XOR' in (d)}
\label{fig:dist5fComp}
\vspace{-0.2in}
\end{figure}

\textbf{\textit{Tight bounds:}} For a given support-set of data-vectors, $\lceil \log \mu_f \rceil$ is the lower bound on minimum number of informant bits required to evaluate the output of function $f$. Let us assume that the sink has obtained $\lceil \log \mu_f \rceil$ informant bits using \textbf{bSerfComp} protocol. Let assume that the size of conditional ambiguity set of data-vectors at the end of $l^{\textrm{th}}$ round, $1 \le l \le \lceil \log \mu_f \rceil$, is $1/2^{1-\epsilon_l}$ of its size at the beginning of this round. Define $\epsilon_{max} = \max \{\epsilon_1, \ldots, \epsilon_{\lceil \log \mu_f \rceil}\}$. Then, the size of conditional ambiguity set of data-vectors after $\lceil \log \mu_f \rceil$ informant bits are received satisfies:
\begin{equation*}
\frac{\mu_{X_1, \ldots, X_N}}{2^{\sum_{l=1}^{\lceil \log \mu_f \rceil} (1-\epsilon_l)}} \le \frac{\mu_{X_1, \ldots, X_N}}{2^{(1-\epsilon_{max}) {\lceil \log \mu_f \rceil}}}
\end{equation*}

Now, if
\begin{equation*}
\frac{\mu_{X_1, \ldots, X_N}}{2^{(1-\epsilon_{max}) {\lceil \log \mu_f \rceil}}} \le \mu^{min}_s
\end{equation*}
then, the function output evaluation finishes with $\lceil \log \mu_f \rceil$ to $\lceil \log \mu_f \rceil + \lceil \log \mu^{min}_s \rceil$ informant bits. So, we have
\begin{equation}
\label{eqn:tightBounds1}
\lceil \log \mu_f \rceil \le \#_f \le \lceil \log \mu_f \rceil + \lceil \log \mu^{min}_s \rceil
\end{equation}
and in this case the lower bound in \eqref{eqn:looseBounds} is actually tight.

Otherwise, that is, if
\begin{equation*}
\frac{\mu_{X_1, \ldots, X_N}}{2^{(1-\epsilon_{max}) {\lceil \log \mu_f \rceil}}} > \mu^{min}_s
\end{equation*}
then, the function computation finishes in $\lceil \log \mu_f \rceil + \lceil \log \mu^* \rceil$ to $\lceil \log \mu_f \rceil + \Big\lceil \log \frac{\mu_{X_1, \ldots, X_N}}{2^{(1-\epsilon_{max}) {\lceil \log \mu_f \rceil}}} \Big\rceil$ bits, where $\mu^*$ is the size of smallest subset of function output values that satisfies
\begin{equation*}
\frac{\mu_{X_1, \ldots, X_N}}{2^{(1-\epsilon_{max}) {\lceil \log \mu_f \rceil}}} \le \sum_{\stackrel{s \in S, S \subseteq S_f}{|S| = \mu^*}} \mu_s
\end{equation*}
Therefore, in this case we have:
\begin{equation}
\label{eqn:tightBounds2}
\lceil \log \mu_f \rceil + \lceil \log \mu^* \rceil \le \#_f \le \lceil \log \mu_f \rceil + \Big\lceil \log \frac{\mu_{X_1, \ldots, X_N}}{2^{(1-\epsilon_{max}) {\lceil \log \mu_f \rceil}}} \Big\rceil
\end{equation}

\section{Some Properties of Distributed Function Computation}
\label{sec:functionProperties}
We discuss some of the significant results, properties, and observations based on our work on distributed function computation problem in asymmetric communication scenarios.

\subsection{Two Classes of Functions}
\label{subsec:functionClasses}
Let us consider two deterministic functions $g = \max\{X_1, X_2\}$ and $h = X_1^{X_2}$. For two or more data-vectors derived from any discrete and finite support-set, the function $g$ may evaluate to same output value. On the other hand, the function $h$ assigns, in general, a unique output value to each of its input pairs. Generalizing this to the functions of $N, N \ge 2$, variables computed over corresponding discrete and and finite support-sets, there are various functions whose behavior is either like function $g$ or like function $h$ above.

The common statistical functions, such as `max', `min', `majority, `mean', `median', and `mode' and logical functions, such as `parity', `bitwise OR', `bitwise AND', and `bitwise XOR' belong to a class of functions, which we call \textit{lossy} functions. Similarly, the functions such as `identity function', `iterated exponentiation', $\sum^N_{i \neq j} X_i e^{X_j}$ belong to a class of functions, which we call \textit{lossless} functions. Formally, for the \textit{lossy} functions the cardinality of their range is smaller than the cardinality of their domain, while for the \textit{lossless} functions two cardinalities are equal. In fact, it is easy to prove that the equality of the sizes of domain and range of a function is an equivalence relation and classes of \textit{lossy} and \textit{lossless} functions are equivalence classes.

The reason these two equivalence classes of functions are relevant in the discussion of distributed function computation is that, in general, the computation of \textit{lossy} functions at the sink requires fewer number of informant bits than computation of \textit{lossless} functions. As DSC belongs to the equivalence class of \textit{lossless} functions (DSC is distributed function computation with function to be computed being the identity map: ${id}_{\overline{X}}$), this implies that, in general, for a given support-set the computation of \textit{lossless} functions requires as many informant bits, in the worst-case, as the solution of DSC problem, while the computation of \textit{lossy} functions requires fewer number of informant bits than DSC.

The \textbf{bSerfComp} protocol of the last section can be used to compute both, the \textit{lossy} and \textit{lossless} functions. However, as for the \textit{lossless} functions, the \textbf{bSerfComp} protocol reduces to much simpler \textbf{bSerCom} protocol of \cite{allerton08} for computing DSC in the corresponding communication scenario, the latter can be deployed at the sink for their computation in asymmetric communication scenarios.

Also, for the \textit{lossless} functions the knowledge of function output allows us to uniquely determine the input data-vector revealed to the informants (reversible function computation), while for the \textit{lossy} functions this is not possible (irreversible function computation). This \textit{apparent} loss of information accompanying the computation of \textit{lossy} functions results in their computation with fewer number of informant bits, but at the cost of sacrificing our ability to recover the input data-vector from their output. For the \textit{lossless} functions, there is no such information loss in their computation, allowing the unambiguous recovery of the input data-vectors from their output, but at the cost of larger number of informant bits.

\begin{table}[thb]
\centering
\caption{Comparison of \textit{lossy} and \textit{lossless} function computation}
\begin{tabular}{ l | l }
\hline
\textit{Lossy} Functions & \textit{Lossless} Functions  \\\hline
1. \textit{Examples:} various common statistical and bitwise functions & 1. \textit{Examples:} DSC, iterated exponentiation \\
2. Range of the function is smaller than its domain & 2. Range of the function is of same size as its domain \\
3. Complex \textbf{bSerfComp} protocol is used for computation & 3. Simple \textbf{bSerCom} of \cite{allerton08} is used for computation \\
4. The worst-case rate-region is larger than DSC & 4. The worst-case rate-region coincides with DSC \\
5. The sink cannot unambiguously recover the data-vector revealed & 5. The sink can unambiguously recover the data-vector revealed \\
\hspace{0.095in} to the informants from function output value & \hspace{0.095in} to the informants from function output value \\\hline
\end{tabular}
\label{table:fComparison}
\end{table}

It should be noted that above classification of functions holds true for any given support-set of data-vectors, in general. However, one can always concoct exceptions where the cardinality of the support-set of function output values for some \textit{lossy} function is same as the cardinality of the corresponding support-set of data-vectors. Similarly, some exception for \textit{lossless} functions can be constructed, where the cardinality of the support-set of function output values is smaller than the cardinality of corresponding support-set of data-vectors. We state without proof that the number of such instances of support-sets is small for any given cardinality of the support-sets. Further, in all situations the following lemma always holds for any function $f$.
\begin{pavikl}
\label{lemma:functionClasses}
If $\lceil \log \mu_f \rceil = \lceil \log \mu_{DSC} \rceil$, then $\#_f = \#_{DSC}$. Also, if $\lceil \log \mu_f \rceil < \lceil \log \mu_{DSC} \rceil$, then $\#_f \le \#_{DSC}$.
\end{pavikl}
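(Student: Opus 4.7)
The plan is to establish each assertion by combining a universal upper bound $\#_f \le \#_{DSC}$, which handles the second assertion entirely and one direction of the first, with a matching lower bound obtained from the binary-tree characterization of Lemma~\ref{lemma:binTrees}.

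First I would prove the universal upper bound. Any optimal bit-serial protocol for DSC lets the sink recover $\overline{X}$ exactly after receiving $\#_{DSC}$ informant bits, and the sink can then evaluate $f(\overline{X})$ locally without any further communication. Since \textbf{bSerfComp} is worst-case optimal for computing $f$ by Theorem~\ref{thrm:bSerfComp}, this forces $\#_f \le \#_{DSC}$ regardless of the relation between $\mu_f$ and $\mu_{DSC}$. The second assertion of the lemma follows immediately, and the $\le$ direction of the first assertion is likewise settled.

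Next, for the equality in the first assertion I need the matching lower bound $\#_f \ge \#_{DSC}$ under the hypothesis $\lceil \log \mu_f \rceil = \lceil \log \mu_{DSC} \rceil$. I would invoke Lemma~\ref{lemma:binTrees} to identify $\#_f$ with the height of a minimum-height binary tree whose internal nodes query bit-locations of the concatenated encoding of $\overline{X}$ and whose leaves label the uniquely resolved elements of $S_f$; the analogous statement with $S_f$ replaced by $S_{X_1,\ldots,X_N}$ characterizes $\#_{DSC}$. In the generic case $\mu_f = \mu_{DSC}$, the function $f$ is injective on the support-set, so any minimum-height tree for DSC is already a valid \textbf{bSerfComp} tree for $f$ because each leaf already carries a distinct function value; combined with the universal upper bound this gives $\#_f = \#_{DSC}$. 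In the borderline subcase $\mu_f < \mu_{DSC}$ with matching ceilings, I would couple an execution of \textbf{bSerfComp} on $S_f$ round-by-round with the corresponding execution on $S_{X_1,\ldots,X_N}$, and argue that the bit-location chosen by the $\argmin$ step is identical in both protocols whenever the current conditional ambiguities share the same $\lceil \log \cdot \rceil$, forcing the two trees to have identical height.

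I expect the main obstacle to lie in this borderline subcase. Rigorously establishing ``matching ceilings implies matching heights'' requires ruling out the possibility that the $\mu_{DSC}-\mu_f$ duplicated function values are distributed across the two halves of a well-chosen bit-split in a way that strictly improves the worst-case splitting ratio for $S_f$ relative to $S_{X_1,\ldots,X_N}$. The cleanest route seems to be a case analysis on this distribution, combined with the observation that any such strict improvement would contradict the common ceiling-log lower bound on the tree height. Since all relevant quantities are integer-valued, once the ceiling-log lower bounds coincide the corresponding tree heights cannot differ by a fractional amount, so the only way $\#_f$ could drop below $\#_{DSC}$ would be for the bit-serial $\argmin$ on $S_f$ to discover a strictly better split than on $S_{X_1,\ldots,X_N}$, which the coupling argument rules out.
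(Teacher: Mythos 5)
The paper gives no proof of this lemma (it is ``Omitted for brevity''), so there is nothing to compare your argument against; judged on its own, your first step is sound but your second step contains a genuine gap. The universal bound $\#_f \le \#_{DSC}$ (run an optimal DSC protocol, recover $\overline{X}$, evaluate $f$ locally) is correct and disposes of the second assertion and of one direction of the first. The injective subcase $\mu_f = \mu_{DSC}$ is also fine, since there knowing $f(\overline{X})$ is equivalent to knowing $\overline{X}$, so $\#_f \ge \#_{DSC}$. The problem is exactly the ``borderline subcase'' you flag, $\mu_f < \mu_{DSC}$ with $\lceil \log \mu_f\rceil = \lceil \log \mu_{DSC}\rceil$, and your proposed repair does not work. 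The coupling invariant --- that the $\argmin$ steps for $f$ and for DSC select the same bit-location whenever the current ambiguities have equal ceilings --- is false: the two protocols minimize different objectives ($\max_b \mu^l_{f|b}$ versus $\max_b \mu^l_{X_1,\ldots,X_N|b}$), and a single query can leave conditional ambiguities with different ceilings, after which the coupling has nothing to propagate. The appeal to ``the common ceiling-log lower bound'' is also circular: it only yields $\#_f \ge \lceil\log\mu_f\rceil$, whereas what you need is $\#_f \ge \#_{DSC}$, and $\#_{DSC}$ can strictly exceed $\lceil\log\mu_{DSC}\rceil$ when no bit splits the support-set evenly.

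In fact no patch is possible, because the first assertion of the lemma fails in this subcase. Take $N = 7$ informants, each holding one bit, with support-set $S = \{e_0, e_1, \ldots, e_7\}$ where $e_0$ is the all-zero vector and $e_j$ is the $j$-th unit vector; then $\mu_{DSC} = 8$ and, since the all-zero answer pattern is always consistent with $e_0$ and each $e_j$ is separated from $e_0$ only by its own bit, $\#_{DSC} = 7$. Let $f$ assign distinct values to $e_1, e_2, e_3, e_4$ and a common fifth value to $e_0, e_5, e_6, e_7$, so $\mu_f = 5$ and $\lceil\log\mu_f\rceil = \lceil\log\mu_{DSC}\rceil = 3$. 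Querying bits $1,2,3,4$ determines $f$ in the worst case, and no three queries suffice (after three zero answers some $e_j$, $j\le 4$, with a distinct output remains unresolved), so $\#_f = 4 \ne 7 = \#_{DSC}$. This shows the obstacle you identified is not merely technical: the equality claim needs an additional hypothesis (e.g., $\mu_f = \mu_{DSC}$, or $\#_{DSC} = \lceil\log\mu_{DSC}\rceil$), under which your injectivity argument, respectively the ceiling-log lower bound, does close the proof.
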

\begin{IEEEproof}
Omitted for brevity.
\end{IEEEproof}

This brings us to relating our work on function classification with Han and Kobayashi's work along similar lines, \cite{087hanKobayashi}. We establish two equivalence classes of functions: \textit{lossy} and \textit{lossless}. Given that DSC problem belongs to the class of \textit{lossless} functions, the worst-case achievable rate-region of \textit{lossless} functions coincides with the worst-case rate-region of DSC problem, while for \textit{lossy} functions it is correspondingly larger. In \cite{087hanKobayashi} too, the authors have introduced such dichotomy of functions of correlated sources: for one class of functions the achievable rate-region coincides with Slepian-Wolf rate-region and for another class it does not. However, in spite of apparent similarities in results, there are some basic differences. First, we are interested in the worst-case information-theoretic analysis while authors in \cite{087hanKobayashi} are concerned with average-case analysis. Second, our results pertain to \textit{one-shot} function computation, while \cite{087hanKobayashi} deploys block-encoding.

It is interesting to ask if for a given communication scenario, we can always construct two or more equivalence classes of functions based on the communication cost of their computation. This appears to be a largely unexplored problem and a systematic answer that also unifies various previous attempts to classify functions based on their communication costs, as in \cite{105giridharKumar}, \cite{087hanKobayashi}, and this paper, warrants our attention. Further, proposed two classes of the functions can be further refined based on other finer details of the functions and we actually expect the classification structure to be richer than just the dichotomous classification in the paper. As of now, our own work in these directions is in preliminary stage and we propose to address these issues comprehensively in the near future.

\subsection{Dependence of $\#_f$ on $\mu_f$ and $\mu_{DSC}$}
\label{subsec:relations}

\begin{figure}[!t]
\centering
\includegraphics[width=7.0in]{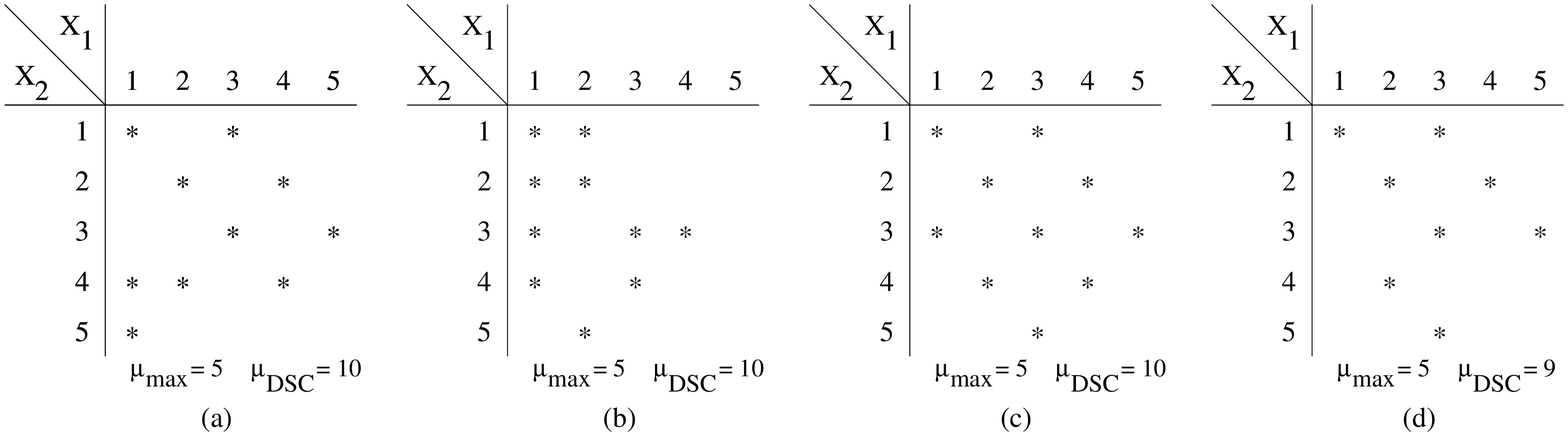}
\vspace{-0.1in}
\caption{`max' computation for support-set in (a) requires $\#_f = 4$ informant bits, (b) requires $\#_f = 3$ informant bits, (c) requires $\#_f = 4$ informant bits, and (d) requires $\#_f = 3$ informant bits}
\label{fig:maxComputation1}
\end{figure}

\begin{figure}[!t]
\centering
\includegraphics[width=7.0in]{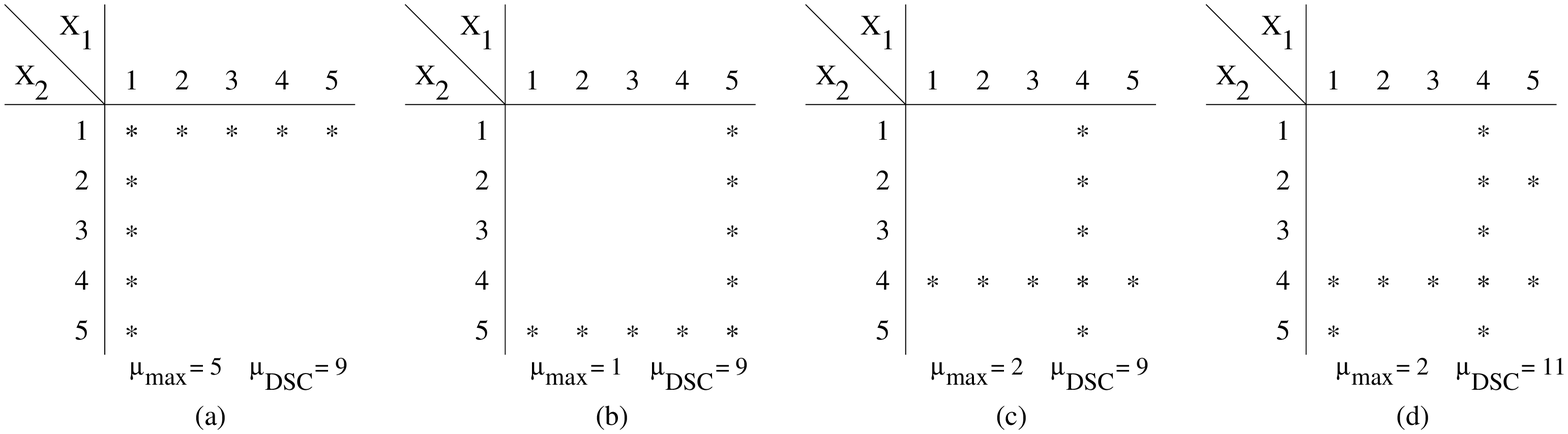}
\vspace{-0.1in}
\caption{`max' computation for support-set in (a) requires $\#_f = 6$ informant bits, (b) requires $\#_f = 0$ or no informant bits, (c) requires $\#_f = 2$ informant bits, and (d) requires $\#_f = 2$ informant bits}
\label{fig:maxComputation2}
\vspace{-0.2in}
\end{figure}

In subsection~\ref{subsec:bSerfCompBounds}, we established how for a given support-set of data-vectors $\#_f$, the minimum number of informant bits needed to compute the function $f$ in the worst-case, depends on $\mu_f$, the ambiguity of function output values, and $\mu_{X_1, \ldots, X_N}$, the ambiguity of data-vectors. Now, let us consider how for a given function $f$, $\#_f$ for two different support-set of data-vectors depends on corresponding $\mu_f$ and $\mu_{X_1, \ldots, X_N}$.

Let $\mu^1_f$ and $\mu^2_f$ denote the cardinality of the set of function output values for first and second support-set, respectively.

Let $\mu^1_{DSC}$ and $\mu^2_{DSC}$ denote the cardinality of the set of data-vectors for first and second support-set, respectively.

Finally, let $\#^1_f$ and $\#^2_f$ denote the minimum number of informant bits required to compute the function $f$ for first and second support-set, respectively.

In this subsection, we state without proof the relation between $\#^1_f$ and $\#^2_f$, given the relations between $\mu^1_f$ and $\mu^2_f$, and $\mu^1_{DSC}$ and $\mu^2_{DSC}$. We provide an exhaustive list of various possibilities and provide an example for each when the sink computes $\max\{X_1, X_2\}$ over data values of two informants for a given support-set.

\textit{Property 1:} $\mu^1_f = \mu^2_f$, $\mu^1_{DSC} = \mu^2_{DSC}$: $\#^1_f$ and $\#^2_f$ may or may not be equal, for example the support-sets in Figures~\ref{fig:maxComputation1}.(a)-(b) for which $\#^1_f \neq \#^2_f$. In this case, it is clear that ambiguities corresponding to function output values and data-vectors alone cannot be used to establish the relation between $\#^1_f$ and $\#^2_f$. This deficiency of the notion of ambiguity is addressed in greater detail in one of our related papers, \cite{paper_2}.

\textit{Property 2:} $\mu^1_f = \mu^2_f$, $\mu^1_{DSC} \neq \mu^2_{DSC} \implies \#^1_f$ and $\#^2_f$ follow the ordering of $\mu^1_{DSC}$ and $\mu^2_{DSC}$. An illustration of this case is given by the support-sets in Figures~\ref{fig:maxComputation1}.(c)-(d).

\textit{Property 3:} $\mu^1_f \neq \mu^2_f$, $\mu^1_{DSC} = \mu^2_{DSC} \implies \#^1_f$ and $\#^2_f$ follow the ordering of $\mu^1_{f}$ and $\mu^2_{f}$. Figures~\ref{fig:maxComputation2}.(a)-(b) illustrate this.

\textit{Property 4:} $\mu^1_f < \mu^2_f$, $\mu^1_{DSC} < \mu^2_{DSC} \implies \#^1_f \le \#^2_f$. Support-sets in Figures~\ref{fig:maxComputation2}.(c) and \ref{fig:maxComputation1}.(c) illustrate this.

\textit{Property 5:} $\mu^1_f < \mu^2_f$, $\mu^1_{DSC} > \mu^2_{DSC} \implies \#^1_f \le \#^2_f$. Support-sets in figures~\ref{fig:maxComputation2}.(d) and \ref{fig:maxComputation1}.(d) illustrate this.

\section{Conclusions and Future Work}
\label{sec:conclusions}
We address the distributed function computation problem in asymmetric and interactive communication scenarios, where the sink is interested in computing some deterministic function of input data that is split among $N$ correlated informants and is derived from some discrete and finite distribution. We consider the distributed function computation as a generalization of distributed source coding problem. We are mainly interested in computing $\#_f$, the minimum number of informant bits required \textit{in the worst-case}, to allow the sink to exactly compute the given function. We provide \textbf{bSerfComp} protocol to optimally compute the functions at sink for any given support-set of data-vectors and prove it computes the worst-case achievable rate-region for computing any given function, illustrating this with examples. Also, we provide a set of bounds on the performance of the proposed protocol.

We define two equivalence classes of functions: \textit{lossy} and \textit{lossless}. We show that the \textit{lossy} functions can be computed, in general, with fewer number of informant bits than \textit{lossless} function, such as DSC. Further, we establish the dependence of $\#_f$, when the function $f$ is computed over two different support-sets, on their respective ambiguities of function output values and data-vectors.

In future, we want to extend this work in three interesting directions. First, in this paper we have assumed that the sink and informants directly communicate with each other. Allowing the sink and informants to indirectly communicate with each other over one or more intermediate nodes (as in multihop networks), offers many more opportunities of reducing the number of bits carried over the network to compute a function at the sink. Second, allowing the sink to tolerate certain amount of error in the computation of the function may reduce the number of informant bits required. We want to address these directions formally in our setup. Finally, we want to come up with a generic framework to classify the functions based on the communication costs of their computation over arbitrary networks with any given model of communication and computation.

\end{document}